\newtheorem{proposition}{Proposition}
\def\phi{\varphi}
\def\l{\left}
\def\r{\right}
\def\({\left(}
\def\){\right)}
\def\b0{{\mathbf{0}}}
\newcommand{\Pout}{P_{\mathsf{out}}}
\newcommand{\nn}{\nonumber}
\begin{document}
\title{ Coverage and Economy of Cellular Networks with Many Base Stations}
\author{Seunghyun Lee and Kaibin Huang\thanks{ S. Lee and K. Huang are with Yonsei University, S. Korea. Email: lshnsy@gmail.com, huangkb@ieee.org.  Updated on \today }}

\maketitle
\begin{abstract}
The performance of a cellular network can be significantly improved by employing many base stations (BSs), which shortens transmission distances. However, there exist no known results on quantifying the performance gains from deploying many BSs. To address this issue, we adopt a stochastic-geometry model of the downlink cellular network  and analyze the mobile outage probability. Specifically, given Poisson distributed BSs, the outage probability is shown to diminish inversely with the increasing ratio between the BS and mobile densities. Furthermore, we analyze the optimal tradeoff between the performance gain from increasing the BS density and the resultant network cost accounting for energy consumption, BS hardware and backhaul cables. 
The optimal BS density is proved to be proportional to the square root of the mobile density and the inverse of the square root of the cost factors considered. 
\end{abstract}

\section{Introduction}
Compared with the advancement in physical-layer techniques, reducing the cell size by using more base stations (BSs) have resulted in much more significant throughput gains. This  observation has led to active research on the deployment additional BSs to shorten the transmission distances \cite{SurveyOn3GPPHetero:2011}. In view of prior work, it remains unclear what is the network performance in the limit of many BSs, which is addressed in  this letter. 

The difficulty of quantifying the asymptotic network performance lies in the lack of a practical and yet tractable cellular-network model. Traditionally, cellular networks are modeled using  the grid model  and the evaluation of their performance has  to rely on simulation (see e.g., \cite{OnTheCapacityCellularCDMA}). Recently,  a tractable stochastic-geometry model for a downlink cellular network has been proposed in \cite{Andrews:TractableApproachCoverageCellular:2010}, where BSs are modeled as  a homogeneous Poisson point process (PPP) which allows the derivation of the mobile outage probability in a relatively simple form. A key observation in \cite{Andrews:TractableApproachCoverageCellular:2010} is that the outage probability is insensitive to the change on the BS density assuming all BSs transmit.  This assumption does not hold  for  the scenario where BSs significantly outnumber mobiles and consequently many cells are empty. This scenario exists in heterogeneous networks where dense microcell and femtocell BSs are installed \cite{SurveyOn3GPPHetero:2011} or cellular networks where a large number of distributed antennas are deployed in each cell and each antenna functions as a virtual BS \cite{WanChoi:DistributedAntennas:2007}. In this letter,   the network model in \cite{Andrews:TractableApproachCoverageCellular:2010} is modified by preventing  BSs in empty cells from transmitting. In the limit of many BSs, the active-BS density converges to the mobile density while the transmission distances diminish. As a result, increasing the BS density can increase the received signal power  without causing additional inter-cell interference. The resultant network performance gain is quantified in this letter. Specifically, given fixed mobile density, it is shown that the outage probability diminishes inversely  with increasing BS density. It is verified by simulation that this result derived for asymptotically many BSs is accurate even in the practical range of BS density.      

Despite improvements on the network coverage, the deployment of many  BSs increases the network cost including the length of backhaul cables connecting BSs to switching centers \cite{Baccelli:StochGeometryArchitectCommNetwork:2006},  the BS hardware and the network energy consumption. In this letter, the BS density is optimized for achieving an optimal tradeoff between network performance and cost. To this end, the network model is augmented with an additional homogeneous PPP modeling switching centers. The optimal BS density is derived by minimizing a multi-objective cost function that accounts for the BS hardware, total cable length, total energy consumption and the outage probability. It is shown that the optimal BS density is proportional to the square root of the mobile density and inversely proportional to the square root of a linear combination of the network cost factors.

\section{Network Model} \label{Sec:SysMod}
BSs, mobiles  and switching centers are modeled as independent PPPs $\Sigma_b$, $\Sigma_u$ and $\Sigma_s$ of density $\lambda_b$, $\lambda_u$ and $\lambda_s$, respectively. Mobiles are assigned to the nearest BSs and BSs are connected to the nearest switching centers by cable, resulting in the network architecture in Fig.~\ref{Fig:Voronoi}.  In each time slot, a BS in an empty cell is silent and a BS in a nonempty cell transmits to  a single mobile selected from mobiles in the same cell with equal probability, which is called a typical active mobile. We assume that all BSs use the same transmission power $\mu$. Then the signal transmitted by a BS $Y$ is received at the intended mobile with the power $S_Y = \mu h_Y D^{-\alpha}_Y$ 
where $\{h_Y\}$ are i.i.d. $\text{exp}(1)$ random variable modeling Rayleigh fading, $D_Y$ is the transmission distance,  and $\alpha$ is the path-loss exponent.  By adopting a typical  model (see e.g., \cite{EnergyEfficiencyBSDeployment:Fred}),   the power $P_b$ consumed by a BS is given as 
$P_b = \mathcal{A}\mu + \mathcal{B}$ where $\mathcal{A}$ is a constant and $\mathcal{B}$ is the offset  power consumed regardless of if the BS is transmitting. 

\begin{figure}
\centering
\includegraphics[width=7.5cm]{./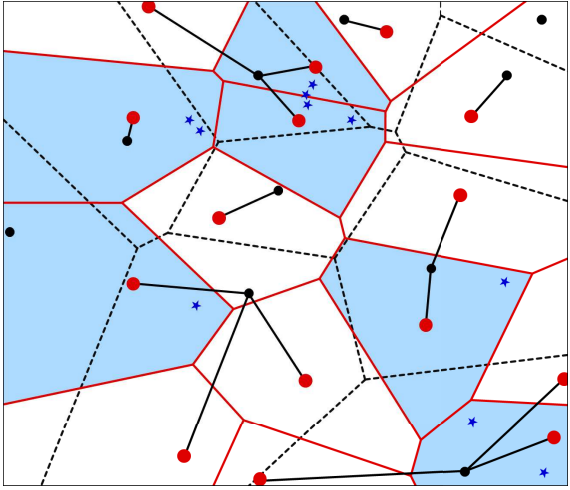}\vspace{5pt}
\caption{The stochastic-geometry model of a  cellular network. The base stations,  switching centers and mobiles are marked with big dots, little dots, and stars, respectively. Each BS is connected to the nearest switching center by  cable and serves a single mobile in the corresponding cell.  } \label{Fig:Voronoi}
\end{figure}

In this letter, we consider outage probability for a typical active mobile as a network performance metric. Let $U^\star$ and $Y^\star$ denote a typical active mobile and the corresponding serving BS, respectively. Assuming an interference-limited network, correct decoding at each mobile requires that the received signal-to-interference ratio (SIR) is above a given threshold $\theta$. Then outage probability $\Pout$ for  $U^\star$  is  
$\Pout = \mathrm{Pr}(\mathsf{SIR}< \theta).$ Note that since the BSs and mobiles are distributed as PPPs, the outage probability for a typical active mobile is independent with the number of other mobiles in the same cell.

\section{Mobile Performance} \label{Sec:Performance}

\subsection{Approximation of the Transmitting BS Process}
Even though the BSs are distributed as a PPP, the process of transmitting BSs is not because the empty-cell probability for each BS depends on the cell area that is determined by BSs' relative locations. For tractability, the transmitting BSs are modeled as a homogeneous PPP derived by thinning the BS process $\Sigma_b$ as follows. Let $p$ represent the empty-cell probability, namely the probability that  the typical BS is assigned no mobile. Mathematically, 
\begin{equation}
p = \Pr\left( |Y^\star - U| > |Y - U| \ \forall\ Y \in \Sigma_b\backslash\{Y^\star\},  U \in \Sigma_u \right). \nn
\end{equation}
We can obtain $p$ using the  following result from \cite{Ferenc:PoissonApprox} that approximates  the distribution function of $S$, the area of a typical Voronoi cell in a Poisson random tessellation
\begin{equation} \label{Approx:Voronoi}
f_S(x) \approx \frac{3.5^{3.5}}{\Gamma(3.5)} \lambda_b^{3.5} x^{2.5} e^{-3.5\lambda_b x}.
\end{equation}
Specifically, using \eqref{Approx:Voronoi} and the fact that $p$ is the average void probability of the mobile process for a typical Voronoi cell, 
\begin{align}
p & = \int_{0}^\infty e^{-\lambda_u x}f_S(x)dx \\
& \approx \left(1+\frac{\lambda_u}{3.5\lambda_b}\right)^{-3.5}\label{Eq:OffProb:a}\\
& \approx  1-\frac{\lambda_u}{\lambda_b}, \qquad \lambda_b/\lambda_u\rightarrow \infty.  \label{Eq:OffProb}  
\end{align}
It can be observed from~\eqref{Eq:OffProb} that $p$ increases with the growing ratio $\lambda_b/\lambda_u$. Intuitively, by increasing the BS density with the mobile density fixed, cells shrink and as a result the empty-cell probability grows. Given $p$, the transmitting BSs can be approximated as a PPP $\tilde{\Phi}_b$ with the density $(1-p)\lambda_b$. The results derived based on this  approximation is shown by simulation to be accurate as  $\lambda_b/\lambda_u\rightarrow \infty$. 

\subsection{Mobile Outage Probability}
 In terms of $p$, $\Pout$ for a typical active mobile $U^\star$ can be obtained as shown in the following proposition. 

\begin{proposition}\label{Prop:Pout} Given that  the transmitting BSs are approximated by $\tilde{\Phi}_b$,   the outage probability $\Pout$ for a typical active mobile $U^\star$ is  
\begin{equation} \label{Approx:Pout}
\Pout \approx \beta \frac{\lambda_u}{\lambda_b}, \qquad \lambda_b/\lambda_u\rightarrow \infty
\end{equation}
where  $\beta = \theta^{\frac{2}{\alpha}}\int_{\theta^{-\frac{2}{\alpha}}}^\infty \frac{1}{1 + x^{\frac{\alpha}{2}}} d x.$
\end{proposition}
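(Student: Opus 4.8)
\noindent\emph{Proof strategy.} The plan is to mirror the coverage-probability derivation for a Poisson cellular network (cf.\ \cite{Andrews:TractableApproachCoverageCellular:2010}), the only change being that the interfering BSs are drawn from the thinned process $\tilde{\Phi}_b$ of density $(1-p)\lambda_b$ rather than from $\Sigma_b$. First I would condition on the serving distance $D_{Y^\star}=r$. Since $U^\star$ is assigned to the nearest BS and that BS is necessarily active, $r$ has the Poisson contact-distance density $f_D(r)=2\pi\lambda_b r\, e^{-\pi\lambda_b r^2}$; moreover every interfering active BS lies outside the disc $B(U^\star,r)$, so under the approximation $\tilde{\Phi}_b$ the interferers form a PPP of density $(1-p)\lambda_b$ on $\{|x|>r\}$. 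Writing $I=\sum_{Y\in\tilde{\Phi}_b,\,|Y|>r} h_Y|Y|^{-\alpha}$ for the aggregate interference and using $h_{Y^\star}\sim\exp(1)$, the conditional success probability is $\Pr(\SIR\ge\theta\mid r,I)=\Pr(h_{Y^\star}\ge\theta r^\alpha I\mid r,I)=e^{-\theta r^\alpha I}$, so $1-\Pout=\E_r\!\big[\mathcal{L}_I(\theta r^\alpha)\big]$, the Laplace transform of $I$ evaluated at $s=\theta r^\alpha$.

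Next I would compute $\mathcal{L}_I$ via the probability generating functional of the PPP together with the fading Laplace transform $\E_h[e^{-shv^{-\alpha}}]=(1+sv^{-\alpha})^{-1}$, obtaining $\mathcal{L}_I(s)=\exp\!\big(-2\pi(1-p)\lambda_b\int_r^\infty \tfrac{sv^{-\alpha}}{1+sv^{-\alpha}}\,v\,dv\big)$. Putting $s=\theta r^\alpha$ and substituting $u=v/r$ turns the exponent into $-\pi(1-p)\lambda_b r^2\cdot 2\int_1^\infty \tfrac{u}{1+\theta^{-1}u^\alpha}\,du$, and the further substitution $x=\theta^{-2/\alpha}u^2$ identifies $2\int_1^\infty \tfrac{u\,du}{1+\theta^{-1}u^\alpha}=\theta^{2/\alpha}\int_{\theta^{-2/\alpha}}^\infty \tfrac{dx}{1+x^{\alpha/2}}=\beta$. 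Hence $\mathcal{L}_I(\theta r^\alpha)=e^{-\pi(1-p)\lambda_b\beta r^2}$.

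Finally, substituting back and integrating over $r$, \[1-\Pout=\int_0^\infty e^{-\pi(1-p)\lambda_b\beta r^2}\,2\pi\lambda_b r\, e^{-\pi\lambda_b r^2}\,dr=\frac{1}{1+(1-p)\beta},\] so $\Pout=\frac{(1-p)\beta}{1+(1-p)\beta}$. Inserting the asymptotic $1-p\approx\lambda_u/\lambda_b$ from \eqref{Eq:OffProb} and letting $\lambda_b/\lambda_u\to\infty$ sends the denominator to $1$, which gives $\Pout\approx\beta\,\lambda_u/\lambda_b$ as claimed.

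The computation itself is routine; the main obstacle is justifying the two model reductions that precede it. The first is that the serving-BS distance of a \emph{typical active} mobile is the ordinary Poisson contact distance $f_D$: this should follow because the active-mobile selection rule together with the independence noted in Section~\ref{Sec:SysMod} (the outage event does not depend on the number of co-cell mobiles) lets us treat $U^\star$ as a typical point served by its nearest BS. The second, flagged just before the statement, is replacing the spatially correlated on/off pattern of $\Sigma_b$ by the independently thinned PPP $\tilde{\Phi}_b$ restricted to $\{|x|>r\}$; once this is granted, every subsequent step is exact. A minor technical point is verifying the change-of-variables identity that collapses the interference integral to the constant $\beta$, since it is exactly this that makes the radial integral elementary and yields the closed form $1/(1+(1-p)\beta)$ before the asymptotic is taken.
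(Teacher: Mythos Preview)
Your proposal is correct and follows essentially the same route as the paper: the paper also takes the contact-distance density \eqref{Eq:D:PDF}, invokes the Laplace-transform/PGFL computation from \cite[(15)]{Andrews:TractableApproachCoverageCellular:2010} (which you have written out in full) to arrive at $\Pout=1-1/(1+(1-p)\beta)$, and then substitutes $1-p\approx\lambda_u/\lambda_b$ from \eqref{Eq:OffProb} and Taylor-expands. Your identification of the two modelling reductions that carry the argument, and your explicit change-of-variables verification for $\beta$, simply flesh out steps the paper leaves to the cited reference.
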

\begin{proof}
The interference power $I$ for $U^\star$ is 
\begin{equation}\label{Eq:I:Dist}
I = \sum_{Y\in \tilde{\Phi}_b\backslash \{Y^\star\}} \mu h_Y |U^\star - Y|^{-\alpha} 
\end{equation}
which is known as a shot noise process. 
Moreover, define the transmission  distance $D^\star = |U^\star - Y^\star|$ and $D^\star$ has the following distribution function \cite{FossZuyev:VoronoiProcessPoisson:1996, Andrews:TractableApproachCoverageCellular:2010}
\begin{equation}\label{Eq:D:PDF}
f_{D^\star}(r) = 2\pi\lambda_b r e^{-\pi \lambda_b r^2}. 
\end{equation}
 Using \eqref{Eq:I:Dist}, \eqref{Eq:D:PDF} and the same procedure as \cite[$(15)$]{Andrews:TractableApproachCoverageCellular:2010} with some minor modifications, 
\begin{equation}
\Pout = 1-\frac{1}{1+(1-p)\beta} . \label{Eq:Pout}
\end{equation}
For $\lambda_b/\lambda_u\rightarrow \infty$, $\Pout$ can be approximated using \eqref{Eq:OffProb} as
\begin{align}
\Pout &\approx 1-\frac{1}{1+\beta \frac{\lambda_u}{\lambda_b}}, \qquad \lambda_b/\lambda_u\rightarrow \infty \label{Approx:Pout:a}\\
& = \beta\frac{\lambda_u}{\lambda_b} + O\l(\l(\frac{\lambda_u}{\lambda_b}\r)^2\r). \label{Approx:Pout:b}
\end{align}
The desired result follows from \eqref{Approx:Pout:b}.
\end{proof}

Despite the similarity in proof, the result in Proposition~\ref{Prop:Pout} focuses on the regime of many BSs while that in \cite[$(15)$]{Andrews:TractableApproachCoverageCellular:2010} assumes that mobiles significantly outnumber BSs. Consequently, the outage probability derived in \cite{Andrews:TractableApproachCoverageCellular:2010} is insensitive to the variation on the BS density $\lambda_b$ but $\Pout$ in  \eqref{Approx:Pout} decreases approximately linearly with increasing $\lambda_b$. This relation between $\Pout$ and $\lambda_b$ rises from two factors. First,  with $\lambda_u$ fixed, the density of transmitting BSs and hence the interference power measured at mobiles remain constant even as $\lambda_b$ increasing. Second, the received signal power at an active mobile increases as $\lambda_b/\lambda_u\rightarrow \infty$ due to the reduced transmitting distance.

\section{Optimization of The Base-Station Density} \label{Sec:OptBSDen}

In this section, we analyze the optimal tradeoff between the performance gain by employing many BSs and the resultant increased network cost in terms of BS energy consumption and installation,  and backhaul cable length. 

The BS density $\lambda_b$ can be optimized by minimizing the following multi-objective function $\mathcal{C}(\lambda_b)$: 
\begin{align}\label{Eq:CostFunc:Pout}
\mathcal{C}(\lambda_b)  = c_1 \mathcal{L} + c_2 \lambda_b + c_3 P_{\Sigma} + \phi \Pout
\end{align}
where $c_1$, $c_2$, $c_3$ denote the cost of laying a cable of a unit length, the hardware cost of an individual BS, and the price of consuming one-unit power, and $\phi$ represents the penalty for an event that a mobile is in outage. Moreover, $\mathcal{L}$ represents the average cable length per unit area and follows from  \cite{FossZuyev:VoronoiProcessPoisson:1996} as 
\begin{equation} \label{Eq:Cable}
\mathcal{L}  = \frac{\lambda_b}{2 \lambda_s^{3/2}}. 
\end{equation}
The total power $P_{\Sigma}$ consumed by all BSs can be written as
\begin{align} 
P_{\Sigma} &= (\mathcal{A}\mu + \mathcal{B})(1-p)\lambda_b + \mathcal{B}p\lambda_b  \nn\\
&\approx \mathcal{A}\mu\lambda_u + \mathcal{B}\lambda_b, \qquad \lambda_b/\lambda_u\rightarrow \infty
\label{Eq:PowerNetwork}
\end{align}
where \eqref{Eq:PowerNetwork} follows from \eqref{Eq:OffProb}. By substituting \eqref{Approx:Pout} and  \eqref{Eq:PowerNetwork} into \eqref{Eq:CostFunc:Pout} and minimizing the resultant cost function, the optimal BS density $\lambda_b^*$ is obtained as 
\begin{equation} \label{Eq:OptBS1}
\lambda_b^* = \sqrt{\frac{\phi}{\frac{c_1}{2\lambda_s^{3/2}} + c_2 +c_3 \mathcal{B}}\beta\lambda_u}. 
\end{equation}
It can be observed that  $\lambda_b^*$ is an increasing function of $\phi$ and $\lambda_u$. This suggests that it is desirable to install more BSs given more serve penalty for outage events or a larger number of mobiles. Combining the definition of $\beta$ in Proposition~\ref{Prop:Pout} and \eqref{Eq:OptBS1}, it can be inferred  that the optimal BS density should grow with the increasing target SIR $\theta$. Moreover, $\lambda_b^*$ grows as $\lambda_s$ increases   since the cable cost for installing an additional BS is reduced. In addition, $\lambda_b^*$ is also larger for lower average aggregate prices for one BS $\l(\frac{c_1}{2\lambda_s^{3/2}} + c_2 +c_3 \mathcal{B}\r)$, which agrees with intuition.

\section{Simulation} \label{Sec:Sim}

In this section, the derived  $\Pout$ and optimal BS density $\lambda_b^*$ are validated using simulation. In the simulation, the BSs are modeled as a homogeneous PPP in $\mathds{R}^2$.  The  simulation parameters are set as  $\theta = 3$ dB, $\alpha = 3$, and   $\lambda_u=0.02$. 

\begin{figure}
\centering
\includegraphics[width=9cm]{./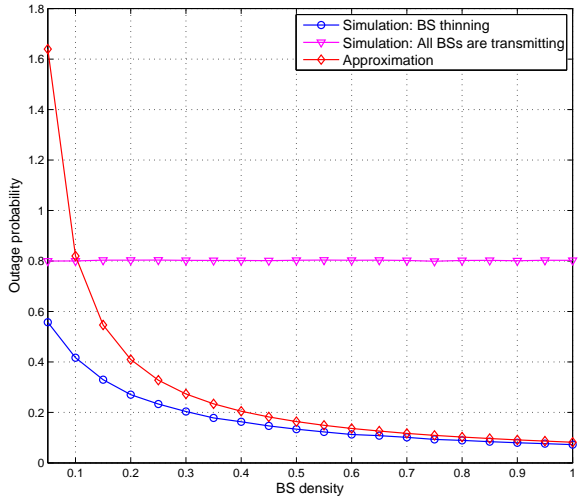}
  \caption{Comparison between the derived outage probability and simulation results for  increasing BS density $\lambda_b$ with fixed $\lambda_u = 0.02$} \label{Fig:Pout}
\end{figure}

\begin{figure}
\centering
\includegraphics[width=9cm]{./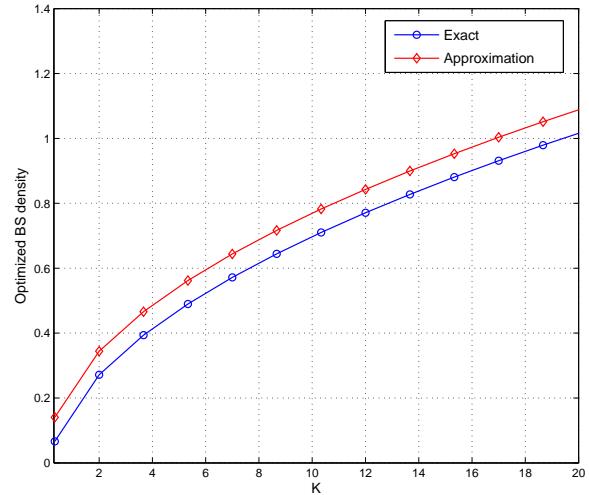}
\caption{Comparison between the optimal BS density ($\lambda_b^*)$ based on approximated $\Pout$ and exact value obtained numerically for increasing $K$ with fixed $\lambda_u = 0.02$.} \label{Fig:OptBSDen} 
\end{figure}

Fig.~\ref{Fig:Pout} compares the outage probability computed using \eqref{Approx:Pout} and its simulated values. They can be observed to converge as $\lambda_b$ increases. Also plotted in Fig.~\ref{Fig:Pout} is the outage probability given that all BSs transmit, which is insensitive to the changes on $\lambda_b$ as also observed in  \cite{Andrews:TractableApproachCoverageCellular:2010}. 

The accuracy of $\lambda_b^*$ is verified in Fig.~\ref{Fig:OptBSDen}. Using the factors in \eqref{Eq:OptBS1}, let us define a parameter 
\begin{equation}
K = \frac{\phi}{\frac{c_1}{2\lambda_s^{3/2}} + c_2 +c_3 \mathcal{B}}
\end{equation}
which represents the ratio of the penalty for outage to average aggregate cost for an one BS. Fig.~\ref{Fig:OptBSDen} compares $\lambda_b^*$ and the optimal density obtained numerically by varying $K$. It can be observed that by increasing $K$, $\lambda_b^*$ traces the exact value within a constant and small gap ($\approx 0.07$). With the gap unchanged, this result reveals that the percentage of error for the derived density reduces with increasing $K$. The error is shown to be lower than $10\%$ with $K>10$. It means that $\lambda_b^*$ is accurate when the penalty for outage is larger than the price for operating each BS.

\section{Conclusion} \label{Sec:Con}
In this letter, we have quantified the performance gains by employing many BSs in cellular networks. Specifically, it has been shown that the outage probability decreases inversely with the increasing BS density. Moreover, the BS density has been optimized by considering both performance gains and network cost including BS hardware, energy consumption and backhaul cables. The optimal BS density has been shown to increase sub-linearly with the growing mobile density given the performance metric of outage probability. Moreover, larger  BS density is desirable for lower prices for power, cables and BS hardware. 

\section*{Acknowledgement}
The authors thank Seung Min Yu for helpful discussions.

\bibliographystyle{ieeetr}

\begin{thebibliography}{1}

\bibitem{SurveyOn3GPPHetero:2011}
A.~Damnjanovic, J.~Montojo, Y.~Wei, T.~Ji, T.~Luo, M.~Vajapeyam, T.~Yoo, O.~Song and D.~Malladi ``A survey on 3GPP heterogeneous networks,'' {\em IEEE Wireless Communications.}, vol.~18, no.~3, pp.~10--21,
  2011.

\bibitem{OnTheCapacityCellularCDMA}
K.~S. Gilhousen, I.~M. Jacobs, R.~Padovani, A.~J. Viterbi, L.~A.~Weaver Jr.,
  and C.~E. Wheatley III, ``On the capacity of a cellular {CDMA} system,'' {\em
  {IEEE} Trans. on Veh. Tech.}, vol.~40, pp.~303--312, May. 1991.

\bibitem{Andrews:TractableApproachCoverageCellular:2010}
J.~G. Andrews, F.~Baccelli, and R.~K. Ganti, ``A tractable approach to coverage
  and rate in cellular networks,'' {\em {IEEE} Trans. on Communications}, vol.~59, no.~11, pp.~3122--3134, Nov. 2011.

\bibitem{WanChoi:DistributedAntennas:2007}
W.~Choi and J.~G. Andrews, ``Downlink performance and capacity of distributed antenna systems in a multicell environment,'' {\em IEEE Trans. on 
 Wireless Communications}, vol.~6, no.~1, pp.~69--73,
  2007.

\bibitem{Baccelli:StochGeometryArchitectCommNetwork:2006}
F.~Baccelli, M.~Klein, M.~Lebourges, and S.~Zuyev, ``Stochastic geometry and
  architecture of communication networks,'' {\em Telecomm. Systems}, vol.~7,
  pp.~209--227, Jun. 1997.

\bibitem{EnergyEfficiencyBSDeployment:Fred}
R.~Fred, J.~F. Albrecht, and P.~F. Gerhard, ``Energy efficiency aspects of base
  station deployment strategies for cellular networks,'' in {\em IEEE Veh. Tech. Conf.  (VTC) Fall}, Sep. 20-23,  2009.

\bibitem{Ferenc:PoissonApprox}
J. S. Ferenca and Z. Neda, ``On the size distribution of {Poisson Voronoi} cells,'' {\em Physica A: Stat. Mechanics and its App.}, pp.~518-526, no.~2, vol.~385, 2007.

\bibitem{FossZuyev:VoronoiProcessPoisson:1996}
S.~G. Foss and S.~A. Zuyev, ``On a {Voronoi} aggregative process related to a
  bivariate {Poisson} process,'' {\em Advances in Applied Prob.}, vol.~28,
  pp.~965--981, Dec. 1996.

\end{thebibliography}

\end{document}